\numberwithin{equation}{section}
\newtheorem{theoreme}{Th\'{e}or\`{e}me}[section]
\newtheorem{lemme}[theoreme]{Lemma}
\newtheorem{proposition}[theoreme]{Proposition}
\newtheorem{corollaire}[theoreme]{Corollary}
\newtheorem{remarque}[theoreme]{Remark}
\newtheorem{hypothese}[theoreme]{Assumption}
\title{Continuous-time modeling and bootstrap for chain-ladder reserving}
\author{Nicolas Baradel\footnote{Inria, CMAP, CNRS, \'{E}cole polytechnique, Institut Polytechnique de Paris, 91200 Palaiseau, nicolas.baradel@polytechnique.edu.}}
\begin{document}

\maketitle	

\begin{abstract}

We revisit the famous Mack's model which gives an estimate for the conditional mean squared error of prediction of the chain-ladder claims reserves. We introduce a stochastic differential equation driven by a Brownian motion to model the accumulated total claims amount for the chain-ladder method. Within this continuous-time framework, we propose a bootstrap technique for estimating the distribution of claims reserves. It turns out that our approach leads to inherently capturing asymmetry and non-negativity, eliminating the necessity for additional assumptions. We conclude with a case study and comparative analysis against alternative methodologies based on Mack's model.

\end{abstract}
\section{Introduction}

The chain-ladder technique is a cornerstone of reserving in non-life insurance. Several stochastic frameworks have been developed to derive the chain-ladder reserve estimator. One notable approach employs maximum likelihood estimation within an over-dispersed Poisson generalized linear model, yielding equivalent reserve estimates \cite{kremer1985einfuhrung}, \cite{mack1991simple}, and \cite{kuang2009chain}.

\medbreak

Mack's model \cite{mack1993distribution} provides a distribution-free approach to obtain estimators and claims reserves similar to the chain-ladder method, relying on minimal assumptions about conditional moments. This framework introduces a stochastic model that also enables estimation of the conditional Mean Squared Error of Prediction (MSEP), a measure of prediction uncertainty.

\medbreak

Several studies have explored Mack's model, some adopting stronger assumptions aligned with Mack's framework. For example, \cite{buchwalder2006mean} introduced a time series for claims development with independent and identically distributed noise satisfying Mack's assumptions. However, this approach may produce negative values for incurred claims, an issue that cannot be resolved by conditioning the noise without compromising increment independence, as noted in \cite{mack2006mean}.

\medbreak

In this paper, we introduce a continuous-time model for claims development based on a well chosen stochastic differential equation driven by Brownian motion. We demonstrate that our model adheres to Mack's assumptions, and in a specific scenario, we can leverage all of Mack's estimators. The primary advantage lies in our ability to simulate total claims reserves using a parametric bootstrap method, which inherently incorporates asymmetry and non-negativity without the need for residual computation or additional assumptions.

\medbreak

Several studies have investigated continuous-time frameworks for the chain-ladder method. For instance, \cite{miranda2013continuous} proposes a continuous framework for loss reserving in non-life insurance, utilizing a bivariate density approach with a kernel-based method to model individual claims within the development triangle, aiming to estimate the distribution in the lower triangle. \cite{bischofberger2020continuous} uses a continous-time approach with marked point processes to capture the timing and magnitude of individual claim payments. In contrast, our work focuses on aggregated data, introducing a continuous-time stochastic process with continuous paths to model the evolution of aggregated incurred claims, satisfying Mack's assumptions.

\medbreak

The paper is organized as follows. Section 2 presents Mack's general model, with a review of key estimators. Section 3 introduces our continuous-time model for the accumulated total claims amount, for which we derive several properties and establishing its connection to Mack's model. Section 4 describes a bootstrap procedure tailored to the continuous-time model, addressing uncertainty in parameter estimation. Finally, Section 5 provides a case study that assesses the impact of the continuous-time framework and compares it with alternative approaches based on Mack's model.

\section{Mack's model}

Mack's model provides a probabilistic framework that aligns with the chain-ladder method. It calculates the conditional MSEP for reserves without requiring a specific distribution: it imposes constraints on the conditional moments of the underlying process.

\medbreak

The model introduces the process $(C_{i,j})_{1 \leq i, j \leq n}$ which represents the accumulated total claims amount for both occurrence year $i$ and development year $j$ across $n$ years of observations. For each $1 \leq k \leq n$, we define:
    \[
        \mathcal{F}^i_k := \sigma\left(C_{i, j}, j \leq k\right), \ \ 1 \leq i \leq n.
    \]
We make the following assumption:
    \begin{hypothese}\label{H_mack} 
        \leavevmode\begin{itemize}
            \item[H1] The random variables $(C_{i_{1}, j})_{1 \leq j \leq n}$ and $(C_{i_{2}, j})_{1 \leq j \leq n}$ are independent for $i_{1} \not= i_{2}$. 
            \item[H2] For $1 \leq j \leq n - 1$, there exists $F_{j} > 0$ such that
                \[
                    \mathbb{E}(C_{i, j+1} \mid \mathcal{F}_{j}^i) = F_{j}C_{i,j}, \ \ \ 1 \leq i \leq n.
                \]
            \item[H3] For $1 \leq j \leq n - 1$, there exists $\Sigma_{j} \geq 0$ such that
                \[
                    Var(C_{i, j+1} \mid \mathcal{F}_{j}^i) = \Sigma_{j}^{2}C_{i,j}, \ \ \ 1 \leq i \leq n.
                \]
        \end{itemize}
    \end{hypothese}
From the above assumption, we can derive the general expressions for the first two moments across all dates:

\begin{lemme}\label{clcor}
For all  $1 \leq i \leq n$ and $s \leq j \leq n$,
\[
        \begin{aligned}
            \mathbb{E}(C_{i,j} \mid \mathcal{F}_{s}^i) &= \left(\prod_{k = s}^{j-1}F_{k}\right)C_{i,s},\\
            Var(C_{i,j} \mid \mathcal{F}_{s}^i) &= \left(\sum_{k=s}^{j-1}\left[\left(\prod_{\ell=k+1}^{j-1}F_\ell^2\right)\Sigma_{k}^{2}\left(\prod_{\ell=s}^{k-1}F_\ell\right)\right]\right)C_{i, s}.
        \end{aligned}
        \]
    \end{lemme}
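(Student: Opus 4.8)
The plan is to prove both identities simultaneously by induction on $j$, the base case $j=s$ being immediate once we adopt the usual conventions that an empty product equals $1$ and an empty sum equals $0$: the right-hand sides then collapse to $C_{i,s}$ and $0$, which are trivially $\mathbb{E}(C_{i,s}\mid\mathcal{F}_s^i)$ and $Var(C_{i,s}\mid\mathcal{F}_s^i)$. The two ingredients are the tower property of conditional expectation and the law of total variance, applied with respect to the intermediate $\sigma$-field $\mathcal{F}_{j-1}^i$, which contains $\mathcal{F}_s^i$ whenever $s\le j-1$.

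For the mean, assuming the formula at stage $j-1$, I condition on $\mathcal{F}_{j-1}^i$ and use H2:
\[
\mathbb{E}(C_{i,j}\mid\mathcal{F}_s^i)=\mathbb{E}\big(\mathbb{E}(C_{i,j}\mid\mathcal{F}_{j-1}^i)\mid\mathcal{F}_s^i\big)=F_{j-1}\,\mathbb{E}(C_{i,j-1}\mid\mathcal{F}_s^i)=F_{j-1}\Big(\prod_{k=s}^{j-2}F_k\Big)C_{i,s},
\]
which is the claimed product up to index $j-1$. For the variance, again conditioning on $\mathcal{F}_{j-1}^i$, the law of total variance gives
\[
Var(C_{i,j}\mid\mathcal{F}_s^i)=\mathbb{E}\big(Var(C_{i,j}\mid\mathcal{F}_{j-1}^i)\mid\mathcal{F}_s^i\big)+Var\big(\mathbb{E}(C_{i,j}\mid\mathcal{F}_{j-1}^i)\mid\mathcal{F}_s^i\big).
\]
By H3 together with the mean identity just proved, the first term equals $\Sigma_{j-1}^2\,\mathbb{E}(C_{i,j-1}\mid\mathcal{F}_s^i)=\Sigma_{j-1}^2\big(\prod_{k=s}^{j-2}F_k\big)C_{i,s}$; by H2 the second term equals $F_{j-1}^2\,Var(C_{i,j-1}\mid\mathcal{F}_s^i)$. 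Writing $V_j:=Var(C_{i,j}\mid\mathcal{F}_s^i)$, this is the first-order linear recursion $V_j=F_{j-1}^2V_{j-1}+\Sigma_{j-1}^2\big(\prod_{k=s}^{j-2}F_k\big)C_{i,s}$ with $V_s=0$. Substituting the induction hypothesis for $V_{j-1}$, the prefactor $F_{j-1}^2$ extends each inner product $\prod_{\ell=k+1}^{j-2}F_\ell^2$ into $\prod_{\ell=k+1}^{j-1}F_\ell^2$, while the additive term $\Sigma_{j-1}^2\big(\prod_{k=s}^{j-2}F_k\big)C_{i,s}$ is precisely the $k=j-1$ summand of the target expression (its inner product $\prod_{\ell=j}^{j-1}F_\ell^2$ being empty, hence $1$, and $\prod_{\ell=s}^{j-2}F_\ell$ matching). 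Collecting terms reproduces the stated closed form at stage $j$.

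The only real difficulty is notational bookkeeping: one must keep track of the empty-product and empty-sum conventions at the boundaries ($\prod_{\ell=k+1}^{j-1}F_\ell^2=1$ when $k=j-1$, and $\prod_{\ell=s}^{k-1}F_\ell=1$ when $k=s$) and check that the induction is correctly seeded — equivalently, that the formula at $j=s+1$ reduces to exactly H2 and H3. Beyond that, nothing deeper is involved: the result is a direct unfolding of the Markov-type moment structure imposed by H2 and H3.
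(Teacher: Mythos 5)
Your proof is correct: the induction on $j$ via the tower property and the conditional law of total variance (with respect to $\mathcal{F}_{j-1}^i \supseteq \mathcal{F}_s^i$), together with the empty-product/empty-sum conventions at $j=s$, yields exactly the stated recursions and their closed forms. The paper states this lemma without proof (it is the standard consequence of Mack's assumptions H1--H3), and your argument is precisely the intended, standard derivation.
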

\noindent Mack provides accurate estimators for both the $F$'s and the $\Sigma^2$'s:

\begin{equation}\label{est_fs}
    \begin{aligned}
    \widehat{F}_{j} &:= \frac{\sum_{i = 1}^{n - j}C_{i, j+1}}{\sum_{i = 1}^{n - j}C_{i,j}},  &1 \leq j \leq n-1,\\
    \widehat{\Sigma}_{j}^2 &:= \frac{1}{n-j-1}\sum_{i = 1}^{n-j}C_{i,j}\left(\frac{C_{i,j+1}}{C_{i,j}} - \widehat{F}_{j}\right)^{2}, &1 \leq j \leq n-2.
    \end{aligned}
\end{equation}
Several methods exist for estimating $\Sigma_{n-1}^2$, as discussed in \cite{mack1993distribution}. Mack also provides an unbiased estimator for the ultimate value:
\begin{equation*}
	\widehat{C}_{i,n} := C_{i, n-i+1}\left(\prod_{k=n-i+1}^{n-1}\widehat{F}_{j}\right), \quad 2 \leq i \leq n,
\end{equation*}
which consequently leads to the reserve estimator:
\begin{equation*}
	\widehat{R} := \sum_{i=2}^{n} \widehat{C}_{i,n} - C_{i, n-i+1}.
\end{equation*}

Moreover, he presents an estimator for the conditional MSEP of the reserves, accounting for uncertainty arising from parameter estimation. An alternative method to assess the conditional MSEP involves employing a bootstrap approach. For a comprehensive introduction to this technique in the realm of insurance reserving, refer to \cite{england2006predictive}. Unlike solely estimating the conditional MSEP of the reserve, bootstrap analysis offers insight into the entire distribution.

\medbreak

The aim of this paper is to establish a continuous-time framework using stochastic differential equations that fulfill Assumption \ref{H_mack}.

\medbreak

In \cite{buchwalder2006mean}, a time series methodology was employed, yielding the following model:

\begin{equation}\label{Cij_ts}
    C_{i,j+1} = F_{j} C_{i,j} + \Sigma_{j} \sqrt{C_{i,j}} \varepsilon_{i,j},
\end{equation}
where the $\varepsilon$'s represent independent variables with a mean of zero and a variance of one. A key limitation of the formulation above is that sampling from \eqref{Cij_ts} may yield negative values for the $C$'s terms, which is inconsistent. While the authors propose using a conditional distribution for $\varepsilon_{i,j}$ based on $C_{i,j-1}$,  this approach, as noted in \cite{mack2006mean}, compromises the independence assumption for the $\varepsilon$'s terms, rendering it unsuitable.

Our framework, detailed in the following section, extends the yearly-based model in \eqref{Cij_ts} to a continuous-time setting, effectively addressing the issue of negative values.

\section{A continuous-time model}

Let $\Omega_W := C([1, 2n], \mathbb{R}^n)$ denote the space of continuous functions mapping $[1, 2n]$ to $\mathbb{R}^n$, where functions start with value 0 at 1. We denote by $W(\omega) = \omega$ the canonical process and let $\mathbb{P}_W$ be the Wiener measure defined on the Borel sets of $\Omega_W$. Consequently, $W = (W^i)_{1 \leq i \leq n}$ comprises $n$ independent Brownian motions. Let $\Omega_1$ be a Polish space and $\mathbb{P}_{1}$ a Borel measure on $\Omega_1$. Finally, we define $\Omega := \Omega_1 \times \Omega_W$ and the product measure $\mathbb{P} := \mathbb{P}_{1} \otimes \mathbb{P}_{W}$ on the Borel sets of $\Omega$.

\medbreak

We introduce the following filtrations, which represent the knowledge at development time $t$ for an occurrence year $i$:
    \[
        \mathcal{F}_{t}^i := \sigma(C_{1}^{i}; \ W_{s}^i, \ s \leq t), \quad 1 \leq i \leq n, \ 1 \leq t \leq n,
    \]
in which the $(C_{1}^{i})_{1 \leq i \leq n}$ are random variables defined on $\Omega_{1}$ and valued in $\mathbb{R}_{+}$. We define the filtration of the entire knowledge at time $t \in [1, n]$.
    \[
        \mathcal{F}_{t} := \sigma(C_{1}^{i}, i \leq t; \ W_{s}^i, \ i+s \leq t+1, s \leq n), \quad 1 \leq t \leq 2n.
    \]
Hereafter, all random variables are considered within the probability space $(\Omega, \mathcal{F}_{2n})$. Let $(C^i_t)_{t \in [1, n]}^{i \in \{1, \ldots, n\}}$ represent the processes of accumulated total claims amount for occurrence year $i$ at development date $t$.

\begin{hypothese}
        \leavevmode\begin{itemize}
            \item[H1'] The random variables $(C_{1}^{i})_{1 \leq i \leq n}$ are square integrable and independent.
	\item[H2'] There exist two measurable and bounded functions $f : [1, n] \rightarrow \mathbb{R}$ and $\sigma : [1, n] \rightarrow \mathbb{R}_{+}$ such that, for all $1 \leq i \leq n$, $(C_{t}^i)_{1 \leq t \leq n}$ is the unique strong solution of the stochastic differential equation:
\begin{equation} \label{EDS_C}  
    C_{t}^i = C_{s}^i + \int_{s}^{t} f_{u} C_u^idu + \int_{s}^{t} \sigma_{u} \sqrt{C_u^i}dW_u^i, \quad 1 \leq s \leq t \leq n.
\end{equation}
\end{itemize}
\end{hypothese}

The processes $C^i$ are well-defined by \eqref{EDS_C} since these stochastic differential equations possess a unique (non-negative) strong solution, as established in, for instance, \cite{yamada1971uniqueness} or \cite[Theorem 4.6.11]{meleard2016modeles}. Furthermore, they satisfy:
\begin{equation}\label{EDS_L2}
	\mathbb{E}\left[\sup_{1 \leq t \leq n} (C_{t}^i)^2\right] < +\infty, \quad 1 \leq i \leq n.
\end{equation}

\begin{remarque}\label{feller}
In the specific case where the coefficients $f$ and $\sigma$ are constant, this process is referred to as the Feller diffusion, originally introduced in \cite{feller1971introduction}, see for example: \cite{revuz2013continuous}, \cite{ikeda2014stochastic}, and \cite{meleard2016modeles}.
\end{remarque}

The above process bears resemblance to the \emph{Cox-Ingersoll-Ross} process commonly employed in finance, yet it distinguishes itself by lacking mean reversion. Notably, it is well-known in population dynamics studies, as it can be interpreted as the limit of the \emph{Galton-Watson branching process}. Its primary characteristic is the \emph{branching property}. This property is also satisfied in the classical Mack chain-ladder model and we find it again in the continuous-time model in a general form.

\begin{lemme}\label{branching}
The processes $(C_{t}^i)_{1 \leq t \leq n}$ satisfy the branching property: if for $1 \leq i \leq n$, $(C_{t}'^i)_{1 \leq t \leq n}$ is another independent process satisfying \eqref{EDS_C} with a different Brownian motion, then $(C_{t}^i+C_{t}'^i)_{1 \leq t \leq n}$ also satisfies \eqref{EDS_C} with yet another Brownian motion.
\end{lemme}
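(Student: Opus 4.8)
The plan is to fix an index $i$, set $Z_t := C_t^i + C_t'^i$ for $t \in [1,n]$, add the two copies of \eqref{EDS_C}, and then show that the combined diffusion term can be rewritten as $\int \sigma_u \sqrt{Z_u}\, d\tilde W_u$ for a Brownian motion $\tilde W$ constructed out of $W^i$ and $W'^i$; the uniqueness of the strong solution of \eqref{EDS_C} then identifies $Z$ as that solution.

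First I would add \eqref{EDS_C} written for $C^i$ and for $C'^i$, which gives, for $1 \le s \le t \le n$,
\[
Z_t = Z_s + \int_s^t f_u Z_u\, du + M_t - M_s, \qquad M_t := \int_1^t \sigma_u\bigl(\sqrt{C_u^i}\, dW_u^i + \sqrt{C_u'^i}\, dW_u'^i\bigr).
\]
Using \eqref{EDS_L2} for both processes and the boundedness of $\sigma$, $M$ is a continuous square-integrable martingale (for the natural filtration generated by $C_1^i$, $C_1'^i$ and the two Brownian motions), and since $W^i$ and $W'^i$ are independent the cross-bracket vanishes, so
\[
\langle M\rangle_t = \int_1^t \sigma_u^2\bigl(C_u^i + C_u'^i\bigr)\, du = \int_1^t \sigma_u^2 Z_u\, du.
\]
Next I would set $\tilde W_t := \int_1^t \mathbf{1}_{\{\sigma_u^2 Z_u > 0\}}\,(\sigma_u\sqrt{Z_u})^{-1}\, dM_u + \int_1^t \mathbf{1}_{\{\sigma_u^2 Z_u = 0\}}\, dB_u$, where $B$ is an auxiliary Brownian motion independent of everything else (on an enlarged probability space if necessary; equivalently one invokes a representation theorem for continuous local martingales of prescribed bracket). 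Since the occupation density of $\langle M\rangle$ is carried by $\{\sigma_u^2 Z_u > 0\}$, the stochastic integral $\int \mathbf{1}_{\{\sigma_u^2 Z_u = 0\}}\, dM_u$ vanishes; hence $\langle \tilde W\rangle_t = t - 1$ and, by Lévy's characterization, $\tilde W$ is a Brownian motion. A short bracket computation then gives $M_t = \int_1^t \sigma_u \sqrt{Z_u}\, d\tilde W_u$, because on $\{\sigma_u^2 Z_u > 0\}$ the integrands match and on the complement both $\sigma_u\sqrt{Z_u}\, d\tilde W_u$ and $dM_u$ are zero. Substituting back shows that $Z$ satisfies \eqref{EDS_C} driven by $\tilde W$.

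Finally I would note that $Z$ is square-integrable (from $(a+b)^2 \le 2a^2 + 2b^2$ and \eqref{EDS_L2}) and adapted to the filtration generated by $Z_1$ and $\tilde W$, so by the uniqueness of the strong solution of \eqref{EDS_C} the process $Z = C^i + C'^i$ \emph{is} that solution; independence across different $i$ is immediate from independence of the ingredients. The one genuinely delicate point is the set $\{\sigma_u^2 Z_u = 0\}$ (which includes the times at which $C^i$ or $C'^i$ hits $0$, as well as any zeros of $\sigma$): there one has to use the explicit form of $\langle M\rangle$ to argue that $M$ does not move, which is exactly what legitimizes both the definition of $\tilde W$ and the final identification. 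All the remaining ingredients — well-posedness of the stochastic integrals, the martingale property, Lévy's characterization — are routine given \eqref{EDS_L2}.
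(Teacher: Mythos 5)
Your argument is correct and follows essentially the same route as the paper, whose ``proof'' is simply a citation of the standard result (M\'el\'eard, Prop.\ 4.7.1, plus the remark that bounded time-dependent coefficients change nothing): summing the two copies of \eqref{EDS_C}, computing $\langle M\rangle_t=\int_1^t\sigma_u^2 Z_u\,du$ via independence of the driving Brownian motions, and constructing $\tilde W$ by L\'evy's characterization on an enlarged space with the careful treatment of $\{\sigma_u^2 Z_u=0\}$ is exactly the standard argument behind that citation, written out in full. Only your closing appeal to adaptedness and strong uniqueness is superfluous, since the lemma merely asserts that $Z=C^i+C'^i$ satisfies \eqref{EDS_C} with some Brownian motion, which your construction already delivers.
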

\begin{proof}
This property is standard when considering constant coefficients. For instance, refer to \cite[Proposition 4.7.1]{meleard2016modeles} or, for a more general approach, refer to \cite{lamperti1967continuous}. With bounded time-dependent coefficients, the proof remains straightforward, without any significant differences.
\end{proof}

\begin{remarque}
The branching property of Lemma \ref{branching} above implies the following consequence: if we consider a portfolio consisting of $d$ independent components, each governed by the dynamics defined in (\ref{EDS_C}) with identical parameters $f$ and $\sigma$, then the aggregation of these $d$ components will also exhibit the dynamics described by (\ref{EDS_C}). Consequently, it will yield the same aggregated reserve distribution. Similarly, dividing a portfolio into two homogeneous independent sub-portfolios maintains the same dynamics and, consequently, the same aggregated reserve distribution. Implicit in this assertion is the assumption that all constituents of a portfolio are independent.
\end{remarque}

\begin{remarque}
We began defining the process at $t = 1$, with the initial condition $(C_{1}^i)_{1 \leq i \leq n}$ as random variable. This approach aligns with the Mack's general framework, as we make no assumptions about $(C_{1}^i)$ other than its implicit squared integrability. Additionally, extending the process $(C_{t}^i)_{1 \leq t \leq n}$ defined in \eqref{EDS_C} back to $t=0$ would require $C_{0}^{i} > 0$, which is not relevant. Implicitly, the randomness of $(C_{1}^i)_{1 \leq i \leq n}$, corresponding to the year of occurrence, follows a different process. This process does not need to be defined for the chain-ladder technique to derive the reserves and their conditional MSEP or distribution, conditional on the current information. However, it should be defined in order to simulate $C_{n}^{n+1}$.
\end{remarque}

We now derive the first two conditional moments of the $C$'s to verify Assumption \ref{H_mack}.

\begin{proposition}
The first two conditional moments of the processes $(C_{t}^i)_{1 \leq t \leq n}$ are, for all $1 \leq s \leq t \leq n$ and $1 \leq i \leq n$,
\[
    \begin{aligned}
        \mathbb{E}\left(C_{t}^i \mid \mathcal{F}_s^i\right) &= C_{s}^i e^{\int_{s}^{t} f_u du },\\
        Var\left(C_{t}^i \mid \mathcal{F}_s^i\right) &= C_s^i\int_{s}^{t}\sigma_u^2e^{\int_{s}^u f_z dz + \int_{u}^t 2f_z dz}du.
    \end{aligned}
\]
\end{proposition}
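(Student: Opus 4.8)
The plan is to reduce the statement to solving two linear first-order ODEs: one for the conditional mean $m(t):=\mathbb{E}(C_t^i\mid\mathcal{F}_s^i)$ and one for the conditional second moment $q(t):=\mathbb{E}\bigl((C_t^i)^2\mid\mathcal{F}_s^i\bigr)$, and then to form $Var(C_t^i\mid\mathcal{F}_s^i)=q(t)-m(t)^2$. Fix $i$ and $s$ throughout.

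\emph{Step 1: the mean.} Taking the $\mathcal{F}_s^i$-conditional expectation in the integral form \eqref{EDS_C}, the Itô term drops out because it is a genuine (not merely local) martingale: its quadratic variation is $\int_s^{\cdot}\sigma_u^2 C_u^i\,du$, whose expectation is finite since $\sigma$ is bounded and, by \eqref{EDS_L2}, $\mathbb{E}\bigl[\sup_{1\le u\le n}C_u^i\bigr]<\infty$. A Fubini argument (again legitimate thanks to \eqref{EDS_L2}) lets me interchange $\mathbb{E}(\,\cdot\mid\mathcal{F}_s^i)$ with $\int_s^t f_u(\,\cdot\,)\,du$, so that $m(t)=C_s^i+\int_s^t f_u\,m(u)\,du$. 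Since $f$ is bounded, this linear integral equation has the unique solution $m(t)=C_s^i\,e^{\int_s^t f_u\,du}$, which is the first claimed identity.

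\emph{Step 2: the second moment.} Applying Itô's formula to the $C^2$ function $x\mapsto x^2$ (so the non-Lipschitz diffusion coefficient causes no difficulty here), one gets
\[
(C_t^i)^2=(C_s^i)^2+\int_s^t\bigl(2f_u(C_u^i)^2+\sigma_u^2 C_u^i\bigr)\,du+\int_s^t 2\sigma_u (C_u^i)^{3/2}\,dW_u^i .
\]
Conditioning on $\mathcal{F}_s^i$ kills the stochastic integral provided it is a true martingale, i.e. $\mathbb{E}\int_s^t\sigma_u^2(C_u^i)^3\,du<\infty$; this is the only genuinely delicate point. It follows from a standard higher-moment bound $\mathbb{E}\bigl[\sup_{1\le t\le n}(C_t^i)^3\bigr]<\infty$ for SDEs with bounded coefficients (via Burkholder–Davis–Gundy and Grönwall, exactly as \eqref{EDS_L2} is obtained), or, to stay self-contained, by localizing with $\tau_N:=\inf\{t:C_t^i\ge N\}$, deriving the identity below for the stopped process, and letting $N\to\infty$ using \eqref{EDS_L2} together with monotone and dominated convergence. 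Using Step 1 to substitute $\mathbb{E}(C_u^i\mid\mathcal{F}_s^i)=C_s^i e^{\int_s^u f_z\,dz}$, the function $q$ solves
\[
q(t)=(C_s^i)^2+\int_s^t\Bigl(2f_u\,q(u)+\sigma_u^2\,C_s^i e^{\int_s^u f_z\,dz}\Bigr)\,du .
\]

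\emph{Step 3: solve and combine.} The last display is linear; multiplying by the integrating factor $e^{-\int_s^t 2f_z\,dz}$ and integrating gives, after simplifying the exponent via $-\int_s^u f_z\,dz+\int_s^t 2f_z\,dz=\int_s^u f_z\,dz+\int_u^t 2f_z\,dz$,
\[
q(t)=(C_s^i)^2 e^{\int_s^t 2f_z\,dz}+C_s^i\int_s^t\sigma_u^2\,e^{\int_s^u f_z\,dz+\int_u^t 2f_z\,dz}\,du .
\]
Finally $Var(C_t^i\mid\mathcal{F}_s^i)=q(t)-m(t)^2$; since $m(t)^2=(C_s^i)^2 e^{2\int_s^t f_u\,du}=(C_s^i)^2 e^{\int_s^t 2f_z\,dz}$, the first term of $q(t)$ cancels, leaving precisely the stated variance. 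The only obstacle is the integrability needed for the martingale property in Step 2; everything else is routine ODE bookkeeping.
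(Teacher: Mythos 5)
Your proposal is correct and follows essentially the same route as the paper: condition the SDE to get a linear ODE for the mean, apply It\^{o} to $C^2$ with a localization argument (your $\tau_N$ is the paper's $T_m$) and dominated convergence via \eqref{EDS_L2} to kill the stochastic integral, and solve the resulting linear equation. The only cosmetic difference is that you solve for the second moment $q(t)$ and then subtract $m(t)^2$, whereas the paper subtracts the ODEs first and solves directly for the variance.
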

\begin{proof}
Fix $i \in \{1, \ldots, n\}$. For convenience, denote $C^i$ as $C$, $\mathcal{F}^i$ for $\mathcal{F}$, and $W^i$ as $W$ throughout this proof.\\
\textbf{1.} Applying the expected value operator $\mathbb{E}$ to (\ref{EDS_C}) and utilizing (\ref{EDS_L2}) for the local martingale yields:
\[
    \mathbb{E}\left(C_{t} \mid \mathcal{F}_s\right) = C_{s} + \int_{s}^{t} f_{u} \mathbb{E}\left(C_{u} \mid \mathcal{F}_s\right)du. 
\]
This forms a simple linear homogeneous ordinary differential equation with the unique solution:
\begin{equation}\label{esperance}
    \mathbb{E}\left(C_{t} \mid \mathcal{F}_s\right) = C_{s} e^{\int_{s}^{t} f_u du }.
\end{equation}
\textbf{2.} It\^{o}'s formula gives:
\[\begin{aligned}
    C_{t}^2 &= C_{s}^2 + 2\int_{s}^{t}  f_u C_u^2du + 2\int_{s}^{t}  \sigma_u C_u\sqrt{C_u}dW_u + \int_{s}^{t} \sigma_u^2 C_udu.
    \end{aligned}
\]
Introducing the stopping times $T_{m} := \inf\{t \geq s: C_{t} = m\}$, which tends to infinity a.s. as $m \rightarrow +\infty$, and considering the process $C$ on $\mathbb{R}_{+}$, we apply the expected value operator:
\begin{equation}
        \mathbb{E}\left(C_{t\wedge T_m}^2 \mid \mathcal{F}_s\right) = C_{s}^2 + 2\mathbb{E}\left(\int_{s}^{t\wedge T_m}  f_u C_u^2 du \mid \mathcal{F}_s\right) + \mathbb{E}\left(\int_{s}^{t\wedge T_m} \sigma_u^2 C_u du \mid \mathcal{F}_s\right).
\end{equation}
Taking the limit as $m \rightarrow +\infty$, and using (\ref{EDS_L2}) along with the dominated convergence theorem, we obtain:
\begin{equation}\label{EC2}
        \mathbb{E}\left(C_{t}^2 \mid \mathcal{F}_s\right) = C_{s}^2 + 2\int_{s}^{t}  f_u \mathbb{E}\left(C_u^2 \mid \mathcal{F}_s\right) du + \int_{s}^{t} \sigma_u^2 \mathbb{E}\left(C_u \mid \mathcal{F}_s\right) du.
\end{equation}
From (\ref{esperance}), we have
\[
    \mathbb{E}\left(C_{t} \mid \mathcal{F}_s\right)^2 = C_{s}^2 e^{2\int_{s}^{t} f_u du }
\]
thus, $t \mapsto\mathbb{E}\left(C_{t} \mid \mathcal{F}_s\right)^2$ satisfies the following ordinary differential equation: 
\[
    \mathbb{E}\left(C_{t} \mid \mathcal{F}_s\right)^2 = C_{s}^2 + 2\int_{s}^{t} f_{u} \mathbb{E}\left(C_{u} \mid \mathcal{F}_s\right)^2du.
\]
Combining it with (\ref{EC2}) and (\ref{esperance}) gives:
\[
    Var\left(C_{t} \mid \mathcal{F}_s\right) =  2\int_{s}^{t}  f_u Var(C_u \mid \mathcal{F}_s)du + C_{s}\int_{s}^{t} \sigma_u^2  e^{\int_{s}^{u} f_z dz }du .
\]
It is a linear non-homogeneous ordinary differential equation whose unique solution is:
\[
    Var\left(C_{t} \mid \mathcal{F}_s\right) = C_s\int_{s}^{t}\sigma_u^2e^{\int_{s}^u f_z dz + \int_{u}^t 2f_z dz}du.
\]
\end{proof}

\begin{corollaire}\label{C_param}
The processes $(C_{t}^i)_{1 \leq t \leq n}$ for $1 \leq i \leq n$ satisfy assumptions \textit{H1}, \textit{H2} and \textit{H3} of Assumption \ref{H_mack} by setting, for $1 \leq j \leq n$:
\[
    \begin{aligned}
    F_j &:= e^{\int_{j}^{j+1}f_udu},\\
    \Sigma_j^2 &:= \int_{j}^{j+1}\sigma_u^2e^{\int_{j}^u f_z dz + \int_{u}^{j+1} 2f_z dz}du.
    \end{aligned}
\]
\end{corollaire}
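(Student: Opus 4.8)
The plan is to derive the corollary directly from Hypotheses \textit{H1'}--\textit{H2'} together with the Proposition just established, under the identification $C_{i,j} := C^i_j$ of the discrete claims array with the continuous process sampled at integer development times. There is essentially nothing deep here: \textit{H2} and \textit{H3} are the Proposition read off at $s = j$, $t = j+1$, and \textit{H1} is inherited from the independence of the inputs. The only point requiring a word of care is that the Proposition's conditional moments are taken with respect to the (large) Brownian filtration $\mathcal{F}^i_j = \sigma(C^i_1; W^i_u, u \le j)$, whereas Assumption \ref{H_mack} is phrased with respect to the coarser filtration $\sigma(C_{i,1}, \dots, C_{i,j})$.

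First I would establish \textit{H1}. By \textit{H2'} and strong uniqueness (Yamada--Watanabe, as cited for \eqref{EDS_C}), each path $(C^i_t)_{1\le t\le n}$ is a measurable functional of the pair $(C^i_1, W^i)$. By \textit{H1'} the variables $(C^i_1)_{1\le i\le n}$ are independent, and by construction the Brownian motions $(W^i)_{1\le i\le n}$ are independent of one another and of $\Omega_1$; hence the pairs $(C^i_1, W^i)$, $1\le i\le n$, are mutually independent, and therefore so are the collections $\big(C^{i}_j\big)_{1\le j\le n}$ across $i$. This is exactly \textit{H1}.

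Next, \textit{H2} and \textit{H3}. Applying the Proposition with $s = j$ and $t = j+1$ gives $\mathbb{E}(C^i_{j+1}\mid\mathcal{F}^i_j) = C^i_j\, e^{\int_j^{j+1} f_u du} = F_j C^i_j$ with $F_j > 0$, and $Var(C^i_{j+1}\mid\mathcal{F}^i_j) = C^i_j \int_j^{j+1}\sigma_u^2 e^{\int_j^u f_z dz + \int_u^{j+1} 2 f_z dz}\,du = \Sigma_j^2 C^i_j$ with $\Sigma_j^2 \ge 0$; boundedness of $f$ and $\sigma$ in \textit{H2'} guarantees both quantities are finite. To pass to the Mack filtration, set $\mathcal{G} := \sigma(C^i_1,\dots,C^i_j) \subseteq \mathcal{F}^i_j$ and observe that $C^i_j$ is $\mathcal{G}$-measurable. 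The tower property gives $\mathbb{E}(C^i_{j+1}\mid\mathcal{G}) = \mathbb{E}\big(F_j C^i_j \mid \mathcal{G}\big) = F_j C^i_j$. For the variance I would invoke the law of total variance, $Var(C^i_{j+1}\mid\mathcal{G}) = \mathbb{E}\big(Var(C^i_{j+1}\mid\mathcal{F}^i_j)\mid\mathcal{G}\big) + Var\big(\mathbb{E}(C^i_{j+1}\mid\mathcal{F}^i_j)\mid\mathcal{G}\big)$: the second term vanishes because $\mathbb{E}(C^i_{j+1}\mid\mathcal{F}^i_j) = F_j C^i_j$ is $\mathcal{G}$-measurable, and the first equals $\mathbb{E}(\Sigma_j^2 C^i_j\mid\mathcal{G}) = \Sigma_j^2 C^i_j$, yielding \textit{H3}.

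The step I expect to be the only non-mechanical one is this filtration reduction; everything else is substitution into the Proposition. Note that if Assumption \ref{H_mack} is instead read with $\mathcal{F}^i_j$ taken to be the continuous model's $\sigma(C^i_1; W^i_u, u\le j)$, the corollary is immediate and the total-variance argument is not even needed.
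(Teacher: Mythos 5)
Your proposal is correct and follows essentially the route the paper intends: the paper gives no explicit proof of Corollary \ref{C_param}, treating it as an immediate specialization of the Proposition at $s=j$, $t=j+1$, which is exactly your core step. Your additional care --- deducing \textit{H1} from the independence of the pairs $(C^i_1, W^i)$ via strong uniqueness, and reducing from the Brownian filtration to the coarser $\sigma(C_{i,1},\dots,C_{i,j})$ by the tower property and the law of total variance (the cross term vanishing since $F_j C^i_j$ is measurable for the coarser $\sigma$-algebra) --- is sound and simply makes explicit what the paper leaves implicit.
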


\begin{remarque}
We directly obtain the expected values of the continuous process $C_t^i$ conditional on $\mathcal{F}_{s}^i$, which correspond to the discrete ones stated in Lemma \ref{clcor}.
\end{remarque}

There is no need to precisely track the $(f, \sigma) : t \rightarrow (f_t, \sigma_t)$ function continuously. To simplify matters, we introduce an additional assumption: that the function remains constant over each one-year interval. Consequently, we establish a connection between the estimators derived from the classic framework and our continuous-time framework.

    \begin{hypothese}\label{H_constant}
The functions $f$ and $\sigma$ are constant on each $[t, t+1)$, i.e., for $1 \leq t < n$:
    \[
        \begin{aligned}
            f_t &:= \sum_{j=1}^{n}f_j\mathbf{1}_{[j, j+1)}(t),\\
            \sigma_t &:= \sum_{j=1}^{n}\sigma_j\mathbf{1}_{[j, j+1)}(t).
        \end{aligned}
    \]
    \end{hypothese}

\begin{lemme}\label{smallf}
    Under the additionnal Assumption \ref{H_constant}, the relation in Corollary \ref{C_param} simplifies to
    \[
        \begin{aligned}
            F_j &= e^{f_j} \\
            \Sigma^2_j &= \frac{\sigma_j^2}{f_j}\left(e^{2f_j} - e^{f_j}\right)
        \end{aligned} \ \ \Longleftrightarrow
        \begin{aligned}
            f_j &= \log(F_j) \\
            \sigma^2_j &= \frac{\Sigma_j^2 \log(F_j)}{F_j(F_j-1)}
        \end{aligned}
    \]
\end{lemme}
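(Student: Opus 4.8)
The plan is to simply substitute Assumption~\ref{H_constant} into the integral expressions for $F_j$ and $\Sigma_j^2$ supplied by Corollary~\ref{C_param}, and then to evaluate the resulting elementary integrals. Since $f_u \equiv f_j$ on $[j, j+1)$, the inner exponent $\int_j^{j+1} f_u\,du$ collapses to $f_j$, giving $F_j = e^{f_j}$ immediately; the inverse relation $f_j = \log(F_j)$ follows because $F_j > 0$ by H2 (indeed $F_j = e^{f_j} > 0$ automatically). First I would record these two facts, noting that $\log$ is well defined precisely because $F_j>0$.

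Next I would handle $\Sigma_j^2$. On $[j,j+1)$ the double exponent in the integrand becomes $\int_j^u f_z\,dz + \int_u^{j+1} 2 f_z\,dz = f_j(u-j) + 2f_j(j+1-u) = f_j\bigl(2 - (u-j)\bigr)$, so that
\[
    \Sigma_j^2 = \sigma_j^2 \int_j^{j+1} e^{f_j(2-(u-j))}\,du = \sigma_j^2 e^{2f_j}\int_0^1 e^{-f_j v}\,dv,
\]
after the change of variables $v = u-j$. Evaluating $\int_0^1 e^{-f_j v}\,dv = (1 - e^{-f_j})/f_j$ and multiplying through by $e^{2f_j}$ yields $\Sigma_j^2 = \frac{\sigma_j^2}{f_j}\left(e^{2f_j} - e^{f_j}\right)$, which is the claimed forward direction.

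For the inverse direction I would solve $\Sigma_j^2 = \frac{\sigma_j^2}{f_j}(e^{2f_j}-e^{f_j})$ for $\sigma_j^2$, using $F_j = e^{f_j}$ and $f_j = \log(F_j)$ to rewrite $e^{2f_j} - e^{f_j} = F_j^2 - F_j = F_j(F_j-1)$, giving $\sigma_j^2 = \frac{\Sigma_j^2 f_j}{e^{2f_j}-e^{f_j}} = \frac{\Sigma_j^2 \log(F_j)}{F_j(F_j-1)}$. The only genuine subtlety is the degenerate case $f_j = 0$ (equivalently $F_j = 1$), where the expression $\sigma_j^2/f_j\,(e^{2f_j}-e^{f_j})$ and its inverse are $0/0$ indeterminate forms; there one should interpret the formula as its limiting value, $\Sigma_j^2 = \sigma_j^2$ and $\sigma_j^2 = \Sigma_j^2$, consistent with $\int_0^1 e^{0}\,dv = 1$. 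I expect this limiting/degenerate case to be the one point meriting an explicit remark; everything else is a routine integral computation, so the "main obstacle" is essentially just bookkeeping the change of variables cleanly.
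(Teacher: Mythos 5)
Your proposal is correct and is exactly the computation the paper's one-line proof (``computing the simple integrals'') alludes to: substituting the piecewise-constant $f$ and $\sigma$ from Assumption~\ref{H_constant} into Corollary~\ref{C_param}, evaluating the elementary exponential integral, and inverting. Your extra remark on the degenerate case $f_j=0$ (i.e.\ $F_j=1$), where the formulas are read as their limit $\Sigma_j^2=\sigma_j^2$, is a sound point the paper passes over silently.
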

\begin{proof}
The proof follows straightforwardly from computing the simple integrals.
\end{proof}

Note that in \eqref{EDS_C}, as $C_{t}^i$ approaches zero, both the term preceding $dt$ and the one preceding $dW_t^i$ vanish. We will now discuss the conditional distribution of $C_{t}^{i}$, particularly emphasizing that while it is possible for $C_{t}^{i}$ to reach zero, this occurrence is practically negligible.

\begin{lemme}\label{laplace}
Under Assumption \ref{H_constant}, for $j \leq t \leq j+1$ and $z > -\frac{2f_j}{\sigma_j^2\left(e^{f_{j}(t-j)} - 1\right)}$, the Laplace transform of $C_{t}^i$ conditional on $\mathcal{F}_{j}^i$ is given by:
	\[
		g_{i,j,t}(z) = \mathbb{E}\left(e^{-z C_{t}^i} \, \bigg| \, \mathcal{F}_{j}^i\right) =  \exp\left(-\frac{2f_{j}e^{f_{j}(t-j)}C_j^{i}z}{2f_j + \sigma_j^2\left(e^{f_{j}(t-j)} - 1\right)z}\right).
	\]
\end{lemme}
\begin{proof}
This result is standard; see, for example, \cite[Proposition 4.7.1]{meleard2016modeles} or \cite[Proposition 4.4]{dawson2017introductory}. Note that the latter contains a typographical error, omitting factors of 2 in the expression, though its proof remains correct.
\end{proof}

\begin{remarque}\label{PC0}
For $j \leq t \leq j+1$, under Assumption \ref{H_constant}, we have
	\[
		\mathbb{P}(C_{t}^i = 0 \mid \mathcal{F}_{j}^i) \  = \lim_{z \rightarrow +\infty}g_{i,j,t}(z) = \exp\left(-\frac{2f_{j}e^{f_{j}(t-j)}C_j^i}{\sigma_j^2\left(e^{f_{j}(t-j)} - 1\right)}\right).
	\]
This implies that the processes $(C_{t}^i)_{t \geq 1}$ can reach 0 (and remain there). However, in practice, as we will observe, this probability is often numerically close to 0, signifying the scenario where all claims ultimately cost 0. Moreover, the distribution of $C_{t}^{i}$, conditioned to be positive, is continuous.
\end{remarque}

In a Cox-Ingersoll-Ross framework, the conditional marginal distributions of the process follow a continuous distribution, specifically a non-central chi-squared distribution. Within our framework, these distributions assign positive probability to zero, yet they are fully characterized as described below.

\begin{lemme}\label{loiS}
	Let $N \sim \mathcal{P}(\lambda)$ with $\lambda >0$ and $(X_k)_{k \geq 1} \overset{i.i.d.}{\sim} \mathcal{E}(\beta)$ with $\beta > 0$. Define
	\[
	S := \sum_{k=1}^{N} X_k.
	\]
For $j \leq t \leq j+1$, if 
	\[
		\lambda = \frac{2f_{j}e^{f_{j}(t-j)}C_j^{i}}{\sigma_j^2\left(e^{f_{j}(t-j)} - 1\right)}, \quad \beta = \frac{2f_{j}}{\sigma_j^2\left(e^{f_{j}(t-j)} - 1\right)},
	\]
then the conditional distribution of $C_{t}^i$ given $\mathcal{F}_{j}^i$ is identical to that of $S$ given $\mathcal{F}_{j}^i$.
\end{lemme}
\begin{proof}
For $z > -\beta$, the Laplace transform of $S$ is
	\[
		g_{S}(z) := \mathbb{E}\left(e^{-z S}\right) = g_{N}(-\log(g_X(z)),
	\]
where $g_N(z) = e^{\lambda (e^{-z} - 1)}$ is the Laplace transform of $N$ and $g_X(z) = \frac{\beta}{\beta + z}$ is the Laplace transform of $X_{1}$. Thus,
	\[
		g_{S}(z) = e^{\frac{-\lambda z}{\beta + z}}.
	\]
Equating $g_{S}$ to $g_{i,j,t}$ and solving for $\lambda$ and $\beta$ yields the result.
\end{proof}
\begin{remarque}\label{loiSr}
The preceding lemma provides an exact method for simulations of $S$, eliminating the need for an Euler scheme with fine discretization. Since
	\[
		S \mid N \sim \mathcal{G}(N, \beta), 
	\]
where $\mathcal{G}$ denotes the Gamma distribution, one can simulate $S$ by first generating $N$ and then sampling from the conditional distribution $S \mid N$. This approach ensures both accuracy and computational efficiency in the simulation process.
\end{remarque}

This property is highly valuable for bootstrap simulations. Since Mack's assumptions are satisfied, we obtain the same estimators for the reserves and can compute the same conditional MSEP. Our goal is to propose a bootstrap methodology, tailored for our continuous-time framework, which will enable the estimation of the distribution of the reserves.

\section{The bootstrap methodology}\label{C_bootstrap}

There are the two classical steps:

\begin{enumerate}
    \item Bootstrapping the parameters: the $F$'s and the $\Sigma$'s, to account for the \emph{estimation error};
    \item Simulating the lower part of the triangle using the bootstrapped coefficients to incorporate the \emph{process error}.
\end{enumerate}
We adapt the bootstrap approach described in \cite{england2006predictive} to our framework.

\medbreak

\noindent \textbf{1.} For bootstrapping the coefficients, 
\[
    C_{j+1}^{i, m} = C_{j}^{i} + \int_{j}^{j+1}\widehat{f}_{j}C_u^{i, m}du + \int_{j}^{j+1}\widehat{\sigma}_{j}\sqrt{C_u^{i, m}}dW_u^{i, m}, \quad i + j \leq  n, \ 1 \leq m \leq M.
\]
Note that the above stochastic differential equation uses $C_{j}^{i}$ as its initial condition, not $C_{j}^{i, m}$. By Lemma \ref{loiS} and Remark \ref{loiSr}, the conditional simulation of $C_{j+1}^{i, m}$ is done with:
\[
	C_{j+1}^{i, m} \sim \mathcal{G}(N_{i,j}^m, \widehat{\beta}_{j}) \text{ with } N_{i,j}^m \sim \mathcal{P}(\widehat{\lambda}_{i,j}), \quad i + j \leq  n, \ 1 \leq m \leq M,
\]
where $\widehat{\beta}_{j} := \frac{2\widehat{f}_{j}}{\widehat{\sigma}_j^2\left(e^{\widehat{f}_{j}} - 1\right)}$ and $\widehat{\lambda}_{i,j} := \widehat{\beta}_{j}\,e^{\widehat{f}_{j}}C_j^{i}$.

\medbreak

We obtain the new estimators $(\widehat{F}_{j}^m)$ and $(\widehat{\Sigma}_{j}^m)$ defined as, for all $1 \leq m \leq M$:

\begin{equation}
    \begin{aligned}
    \widehat{F}_{j}^m &:= \frac{\sum_{i = 1}^{n - j}C_{j+1}^{i, m}}{\sum_{i = 1}^{n - j}C_{j}^i}, & 1 \leq j \leq n-1, \\
    (\widehat{\Sigma}_{j}^{m})^2 &:= \frac{1}{n-j-1}\sum_{i = 1}^{n-j}C_{j}^i\left(\frac{C_{j+1}^{i, m}}{C_{j}^i} - \widehat{F}_{j}^m\right)^{2}, & 1 \leq j \leq n-2,
    \end{aligned}
\end{equation}
and $\widehat{\Sigma}_{n-1}^m := \min(\frac{(\widehat{\Sigma}_{n-2}^m)^2}{\widehat{\Sigma}_{n-3}^m}, \widehat{\Sigma}_{n-3}^m, \widehat{\Sigma}_{n-2}^m)$ as proposed in \cite{mack1993distribution}. We then derive $(\widehat{f}_{j}^m)$ and $(\widehat{\sigma}_{j}^m)$ using Lemma \ref{smallf}.

\noindent \textbf{2.} For bootstrapping the process error: 
\[
    C_{n}^{i, m} = C_{n-i+1}^{i} + \int_{n-i+1}^{n}\widehat{f}_{u}^{m}C_u^{i, m}du + \int_{n-i+1}^{n}\widehat{\sigma}_{u}^{m}\sqrt{C_u^{i, m}}dW_u^{i, m}, \quad 2 \leq i \leq  n, \ 1 \leq m \leq M.
\]
Again, by Lemma \ref{loiS} and Remark \ref{loiSr}, the conditional simulation of $C_{j+1}^{i, m}$ is done with:
\[
	C_{j+1}^{i, m} \sim \mathcal{G}(N_{i,j}^m, \widehat{\beta}_{j}^m) \text{ with } N_{i,j}^m \sim \mathcal{P}(\widehat{\lambda}_{i,j}^m), \quad i + j >  n, \ 1 \leq m \leq M,
\]
where $\widehat{\beta}_{j}^m := \frac{2\widehat{f}_{j}^m}{(\widehat{\sigma}_j^m)^2\left(e^{\widehat{f}_{j}^m} - 1\right)}$ and $\widehat{\lambda}_{i,j}^m := \widehat{\beta}_{j}^m\,e^{\widehat{f}_{j}^m}C_j^{i, m}$, and where $C_{n-i+1}^{i, m} := C_{n-i+1}^{i}$.

\noindent \textbf{3.} It yields to the simulation of the reserves:
\begin{equation}\label{r_bootstrap}
	R^m := \sum_{i=2}^{n} C_{n}^{i, m} - C_{n-i+1}^{i}, \quad 1 \leq m \leq M.
\end{equation}
The vector $(R^m)_{1 \leq m \leq M}$ approximates the distribution of the reserves, conditional on our observations.

\begin{remarque}
We described a bootstrap procedure to simulate the reserves. This method can be adapted to simulate $C_{n}^{n+1}$. Given that the only assumption on $C_{1}^{n+1}$ is its square integrability, an additional assumption is needed to simulate it. One approach is to use the exposure and a corresponding parametric distribution, such as $\mathcal{G}\left(\alpha E_{n+1}, \beta\right)$, where $E_{n+1} > 0$ represents the exposure at year $n+1$, and $\alpha > 0$ and $\beta > 0$ are parameters to be fitted using the observations $(C_{1}^{i})_{1 \leq i \leq n}$, assuming the exposure information is available. Once this is done, we can combine the simulations of $C_{1}^{n+1}$ with the $(\widehat{f}_{j}^m)$ and $(\widehat{\sigma}_{j}^m)$, and then apply the bootstrap to obtain the simulations of $C_{n}^{n+1}$.
\end{remarque}

\section{Examples}

We implement our bootstrap method within a continuous-time framework, applying it to two data examples provided by  \cite{mack1993distribution}. The first dataset, initially introduced by \cite{taylor1983second}, is presented in Table \ref{triangle1}. The second dataset, pertaining to the mortgage guarantee business, is shown in Table \ref{triangle2}.

        \begin{table}[H]
        \begin{center}\setlength{\tabcolsep}{1.4mm}
        \fontsize{9pt}{12pt}\selectfont
        \begin{tabular}{|c|c c c c c c c c c c|}
  \hline
  $i$ & $C_{i,1}$ & $C_{i,2}$ & $C_{i,3}$ & $C_{i,4}$ & $C_{i,5}$ & $C_{i,6}$ & $C_{i,7}$ & $C_{i,8}$ & $C_{i,9}$ & $C_{i,10}$ \\
  \hline
  1 & 357848 & 1124788 & 1735330 & 2218270 & 2745596 & 3319994 & 3466336 & 3606286 & 3833515 & 3901463 \\
  2 & 352118 & 1236139 & 2170033 & 3353322 & 3799067 & 4120063 & 4647867 & 4914039 & 5339085 & \\
  3 & 290507 & 1292306 & 2218525 & 3235179 & 3985995 & 4132918 & 4628910 & 4909315 & & \\
  4 & 310608 & 1418858 & 2195047 & 3757447 & 4029929 & 4381982 & 4588268 & & & \\
  5 & 443160 & 1136350 & 2128333 & 2897821 & 3402672 & 3873311 & & & & \\
  6 & 396132 & 1333217 & 2180715 & 2985752 & 3691712 & & & & & \\
  7 & 440832 & 1288463 & 2419861 & 3483130 & & & & & & \\
  8 & 359480 & 1421128 & 2864494 & & & & & & & \\
  9 & 376686 & 1363294 & & & & & & & & \\
  10 & 344014 & & & & & & & & & \\
  \hline
\end{tabular}
\end{center}
\caption{The first dataset used in \cite{mack1993distribution} and originally presented by \cite{taylor1983second}.\label{triangle1}}
\end{table}

        \begin{table}[H]
        \begin{center}\setlength{\tabcolsep}{1.4mm}
        \fontsize{9pt}{12pt}\selectfont
        \begin{tabular}{|c|c c c c c c c c c|}
  \hline
  $i$ & $C_{i,1}$ & $C_{i,2}$ & $C_{i,3}$ & $C_{i,4}$ & $C_{i,5}$ & $C_{i,6}$ & $C_{i,7}$ & $C_{i,8}$ & $C_{i,9}$ \\
  \hline
  1 & 58046 & 127970 & 476599 & 1027692 & 1360489 & 1647310 & 1819179 & 1906852 & 1950105 \\
  2 & 24492 & 141767 & 984288 & 2142656 & 2961978 & 3683940 & 4048898 & 4115760 & \\
  3 & 32848 & 274682 & 1522637 & 3203427 & 4445927 & 5158781 & 5342585 & & \\
  4 & 21439 & 529828 & 2900301 & 4999019 & 6460112 & 6853904 & & & \\
  5 & 40397 & 763394 & 2920745 & 4989572 & 5648563  & & & & \\
  6 & 90748 & 951994 & 4210640 & 5866482 & & & & & \\
  7 & 62096 & 868480 & 1954797 & & & & & & \\
  8 & 24983 & 284441  & & & & & & & \\
  9 & 13121  & & & & & & & & \\
  \hline
\end{tabular}
\end{center}
\caption{The second dataset used in \cite{mack1993distribution}.\label{triangle2}}
\end{table}

We then compare our results with those obtained by Mack and the distribution derived from traditional bootstrap procedures.

\medbreak

We evaluate the conditional MSEP and the bootstrap distribution across the following models:
\begin{itemize}
    \item Mack's model \cite{mack1993distribution}, using the classic Mack's conditional MSEP formula and assuming a Log-normal parameterized distribution for the reserves' distribution;
    \item Mack's model with the bootstrap method;
    \item The time series model \cite{buchwalder2006mean} with the bootstrap technique;
    \item Our continuous-time model incorporating the bootstrap approach.

\end{itemize}
Let us briefly review the first three models. Our continuous-time model with bootstrap was described in the previous section. 

\subsection{Mack's model with a parameterized Log-normal distribution}

We employ the classic estimator $\widehat{C}_{i,n} := C_{i, n-i+1}\prod_{k=n-i+1}^{n-1}\widehat{F}_{k}$, which leads to the estimation of the expected value for the total reserve: 
    \[
     \widehat{\mu}_{R} := \widehat{R}.
    \]
We denote by $\widehat{\sigma}^{2}_{R}$ the conditional MSEP of \cite{mack1993distribution}. Finally, we approximate the distribution of the reserve with a Log-normal distribution by matching the moments:
    \[
        \mathcal{LN}\left(\log(\widehat{\mu}_{R}) - \frac12 \log\left(1+\frac{\widehat{\sigma}^{2}_{R}}{\widehat{\mu}_{R}}\right) , \log\left(1+\frac{\widehat{\sigma}^{2}_{R}}{\widehat{\mu}_{R}}\right)\right).
    \]

\subsection{Mack's model with Bootstrap}\label{M_bootstrap}

For comparison purposes, we calculate both the conditional MSEP and its distribution using a bootstrap method, where the conditional MSEP is the variance of the bootstrap distribution.

\medbreak

We employ the procedure outlined in \cite{england2006predictive}, which we briefly summarize here. First, we compute the Pearson residuals $(r_{i,j})$.

\medbreak

\noindent \textbf{1.} We simulate:
    \[
        C_{i,j+1}^m := \widehat{F}_{j}C_{i,j} + \widehat{\Sigma}_{j}\sqrt{C_{i,j}}r_{i,j}^m, \ \ 1 \leq m \leq M,
    \]
    where each $r_{i,j}^m$ is chosen from the $(r_{i,j})$ uniformly and independently. Using \eqref{est_fs}, we compute $(\widehat{F}_{j}^m, \widehat{\Sigma}_{j}^{m})_{1 \leq j \leq n-1}$ for $1 \leq m \leq M$. 

\medbreak

\noindent \textbf{2.}  We initiate the simulation with $C_{i,n-i+1}^{m} := C_{i,n-i+1}$, and then iteratively simulate the lower triangle for $2 \leq i \leq n$ as follows:
        \[
            C_{i,j+1}^m \sim \mathcal{N}\left(\widehat{F}_{j}^{m}C_{i,j}^m, (\widehat{\Sigma}_{j}^{m})^{2}C_{i,j}^m\right),
        \]
        and we deduce the bootstrap distribution of the total reserve with the formula (\ref{r_bootstrap}).

\subsection{Time series with Bootstrap}

The model developped in \cite{buchwalder2006mean} is based on the relation defined in (\ref{Cij_ts}), which is:

\begin{equation*}
    C_{i,j+1} = F_{j} C_{i,j} + \Sigma_{j} \sqrt{C_{i,j}} \varepsilon_{i,j},
\end{equation*}
where the $\varepsilon$'s are independent and centered with unit variance. We introduce the following hypothesis:
\[
	(\varepsilon_{i,j})_{1 \leq i, j \leq n} \overset{i.i.d.}{\sim} \mathcal{N}(0, 1).
\]
Now, we describe the bootstrap method for this model.

\medbreak

\noindent \textbf{1.} We simulate:
    \[
        C_{i,j+1}^m \sim \mathcal{N}\left(\widehat{F}_{j}C_{i,j}, \widehat{\Sigma}_{j}^{2}C_{i,j}\right), \ \ 1 \leq m \leq M.
    \]
Using \eqref{est_fs}, we derive $(\widehat{F}_{j}^m, \widehat{\Sigma}_{j}^m)_{1 \leq j \leq n-1}$ for $1 \leq m \leq M$. Additionally, it is noteworthy that when $C_{i,j}$ is fixed,

\begin{equation}
    \begin{aligned}
    \widehat{F}_{j}^m &\sim \mathcal{N}\left(\widehat{F}_{j}, \frac{\widehat{\Sigma}^2_j}{\sum_{i=1}^{n-j}C_{i,j}}\right), \\
    (\widehat{\Sigma}_{j}^{m})^2 &\sim \widehat{\Sigma}_j^2\frac{\chi^2_{n-j-1}}{n-j-1},
    \end{aligned}
\end{equation}
and both are independent. We can simulate the $(\widehat{F}_{j}^m, \widehat{\Sigma}_{j}^{m})_{1 \leq j \leq n-1}$ directly.

\medbreak

\noindent \textbf{2.} As in Section \ref{M_bootstrap}, we begin with $C_{i,n-i+1}^{m} := C_{i,n-i+1}$, we simulate iteratively, for $2 \leq i \leq n$ the lower triangle:
        \[
            C_{i,j+1}^m \sim \mathcal{N}\left(\widehat{F}_{j}^{m}C_{i,j}^m, (\widehat{\Sigma}_{j}^{m})^{2}C_{i,j}^m\right),
        \]
        and we deduce the bootstrap distribution of the total reserve with the formula \eqref{r_bootstrap}.

\subsection{Comparison and conclusion}

All bootstraps hereafter use $M = 10^7$ simulations. We begin by computing the conditional MSEP of the different methods using the first dataset from Table \ref{triangle1}, along with the $99.5\%$ quantile, all expressed relative to the common reserve estimator $\widehat{R}$. The conditional MSEP for Mack's Log-normal model is computed using the classical estimator from \cite{mack1993distribution}, while those for other methods are derived from the empirical bootstrap distributions of the reserves.

\begin{table}[H]
        \begin{center}\footnotesize
        \begin{tabular}{|c|c|c|}
  \hline
   Method & $ \sqrt{\widehat{MSEP}}$ (in \% of $\widehat{R}$) & $Q(R; 99.5\%) - \widehat{R}$ (in \% of $\widehat{R}$)  \\
  \hline
  Mack Log-normal & 13.0995 & 38.7466 \\
  Mack Bootstrap & 11.7585 & 33.0675 \\ 
  Time series Bootstrap & 13.1030 & 36.2963 \\
  Continuous-time Bootstrap & 13.1039 & 37.0219\\
  \hline
\end{tabular}
\end{center}
\caption{Conditional MSEP for the three other methods introduced and our \emph{continuous-time Bootstrap} from Section \ref{C_bootstrap} with the dataset from Table \ref{triangle1}.}\label{table_bootstrap}
\end{table}

Our results closely align with Mack's original formula for the conditional MSEP, as seen in the Time Series Bootstrap approach. The Mack Bootstrap yields a lower conditional MSEP, differing from the Time Series Bootstrap only in the simulations of $(\widehat{F}_{j}^m, \widehat{\Sigma}_{j}^{m})_{1 \leq j \leq n-1}$. This discrepancy primarily arises from the Pearson residuals being more \emph{regular}, indicating smaller values. Regarding quantiles, our analysis reveals a modest decrease relative to Mack Log-normal approach, a slight increase compared to the Time Series Bootstrap, and a substantial increase relative to Mack Bootstrap. For Mack's classic approach, which assumes a distribution-free framework, we adopted a log-normal distribution for the quantiles. This choice has notable implications: when a Gamma distribution is used instead, the quantile excess decreases to 36.95\%.

\medbreak

In our simulations, both the Mack Bootstrap and Time Series Bootstrap methods occasionally yield $C_{j}^{i} < 0$. Although rare in this example due to the data's regularity, occurring roughly once every $10^5$ simulations, such occurrences have been removed, with the introduced bias being negligible. The processes $(C_{t}^i)_{1 \leq t \leq n}$ remain non-negative in the continuous-time model, by construction. Nevertheless, in Remark \ref{PC0}, we noted that $\mathbb{P}(C_{j}^i = 0) > 0$ and asserted it to be numerically negligible. The highest probabilities arise for $j = 2$, and we have:
\[
	\mathbb{P}(C_{2}^n = 0) = \exp(-52.3031) \approx 1.9277 \times 10^{-23}.
\]

With less regularly structured data, the Mack or Times series Bootstrap methods might more frequently yield $C_{j}^{i} < 0$, potentially introducing bias if the corresponding simulations are removed or set to 0. However, such occurrences never arise within our continuous-time framework.

\medbreak

In Figure \ref{fig_bootstrap}, we display the complete distributions associated with the various models and our framework.

\begin{figure}[H]
\centering
\includegraphics[scale=0.595]{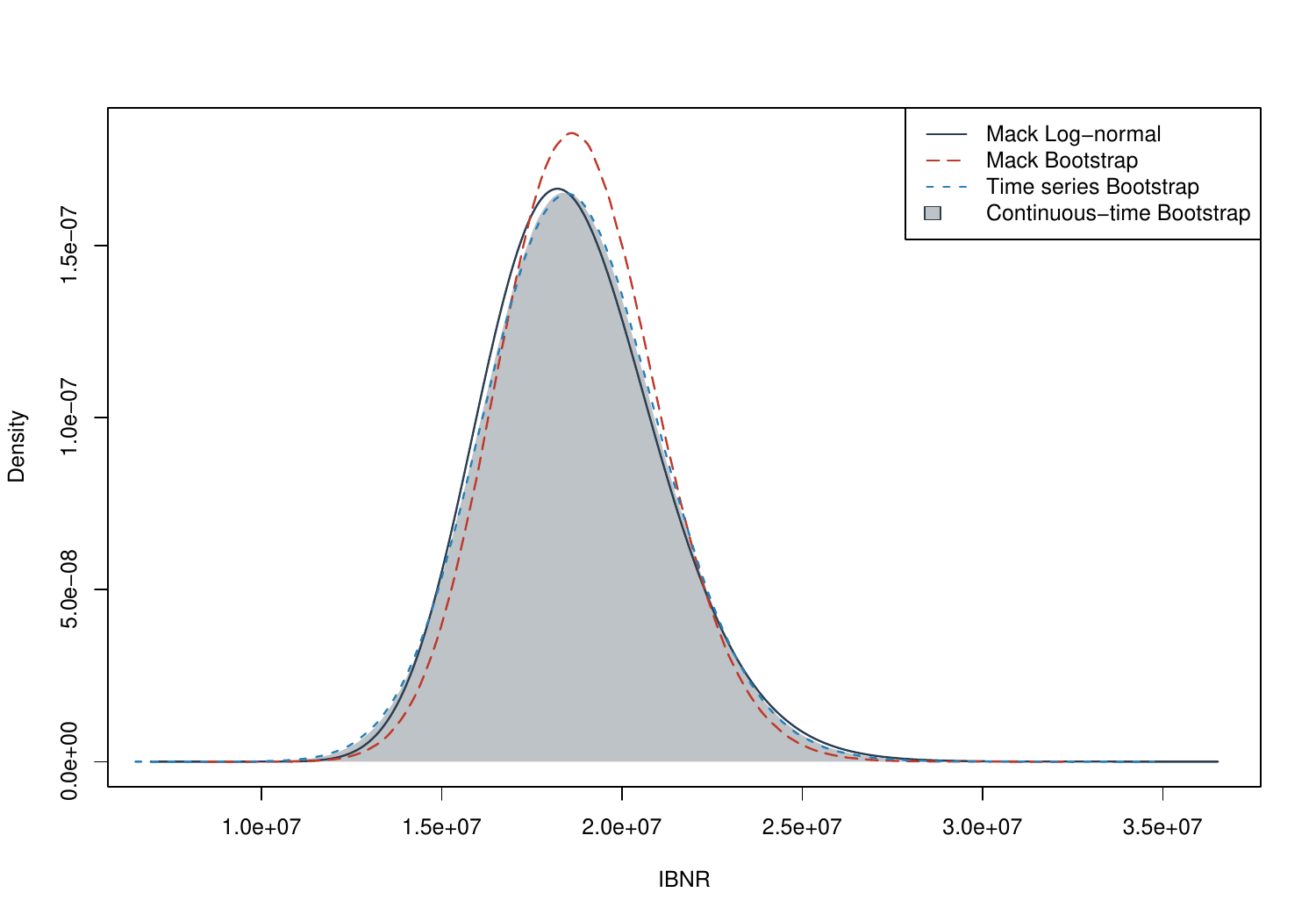}
\caption{Estimated conditional distributions of the total reserve with the dataset from Table \ref{triangle1}.\label{fig_bootstrap}}
\end{figure}

As observed in Table \ref{table_bootstrap}, the distribution within our framework closely resembles that of Mack with the Log-normal parameterized distribution and the Time Series Bootstrap. Our approach offers a significant advantage: it employs continuous simulation to eliminate negative values without introducing bias while maintaining the integrity of moment assumptions.

\medbreak

Finally, in Figure \ref{cn2}, we illustrate the bootstrap distribution of $C_{2}^{n}$ derived from the estimated $\widehat{F}_{1}$ and $\widehat{\Sigma}^2_{1}$, alongside comparisons with a Gaussian distribution and a Gamma distribution possessing equivalent moments.

\begin{figure}[H]
\centering
\includegraphics[scale=0.595]{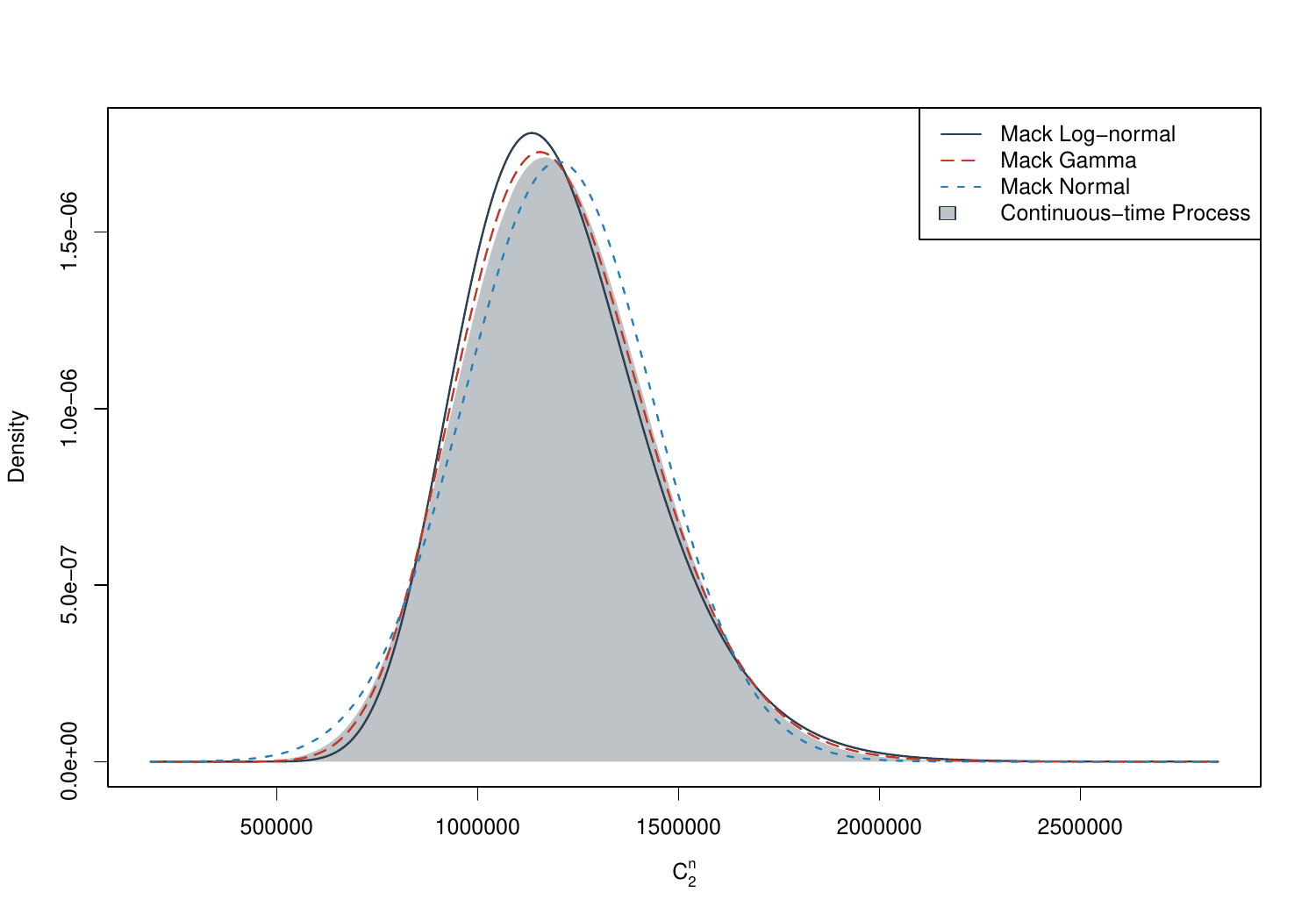}
\caption{Estimated distributions of $C_{2}^{n}$ in our continuous-time model, $\mathbb{P}(C_{2}^n = 0)$ is neglected.}\label{cn2}
\end{figure}

We observe that our continuous-time model provides a slightly asymmetric distribution, closely resembling that of a Gamma distribution.

\bigbreak

We now examine the second dataset introduced in Table \ref{triangle2}. This dataset exhibits less regularity, featuring a development factor of approximately 11 in the first year. As with the previous analysis, we compute the conditional MSEP for the various methods, this time using the new dataset, alongside the $99.5\%$ quantile, all expressed relative to the common reserve estimator $\widehat{R}$. Notably, the bootstrap simulations yield non-negligible negative values, with 18.9\% of Mack’s bootstrap simulations and 26.2\% of Time Series bootstrap simulations encountering such outcomes. Whenever $C_{j}^{i} < 0$, we replace these with 0, a choice we will discuss further at the end.

      \begin{table}[H]
        \begin{center}\footnotesize
        \begin{tabular}{|c|c|c|}
  \hline
   Method & $ \sqrt{\widehat{MSEP}}$ (in \% of $\widehat{R}$) & $Q(R; 99.5\%) - \widehat{R}$ (in \% of $\widehat{R}$)  \\
  \hline
  Mack Log-normal & 25.6337 & 85.5185 \\
  Mack Bootstrap & 22.9662 & 77.2303 \\ 
  Time series Bootstrap & 24.6414 & 76.9349 \\
  Continuous-time Bootstrap & 25.7493 & 88.3811 \\
  \hline
\end{tabular}
\end{center}
\caption{The conditional MSEP of the three other methods introduced and our continuous-time bootstrap from Section \ref{C_bootstrap} with the dataset from Table \ref{triangle2}.}\label{table_bootstrap2}
\end{table}

With this new dataset, the results diverge from previous findings. The Mack Bootstrap and Time Series Bootstrap yield lower conditional MSEP values, whereas the continuous-time bootstrap produces a conditional MSEP comparable to that of the classic Mack method. Similarly, quantile analysis reveals consistent patterns: the continuous-time bootstrap exhibits the highest quantile, surpassing the Mack Log-normal by 3\%. However, adjusting the latter to employ a Gamma distribution reduces the quantile to $78.2503\%$, highlighting the critical influence of the chosen distribution.

\medbreak

In Figure \ref{fig_bootstrap2}, we display the complete distributions associated with the various models and our framework.

\begin{figure}[H]
\centering
\includegraphics[scale=0.595]{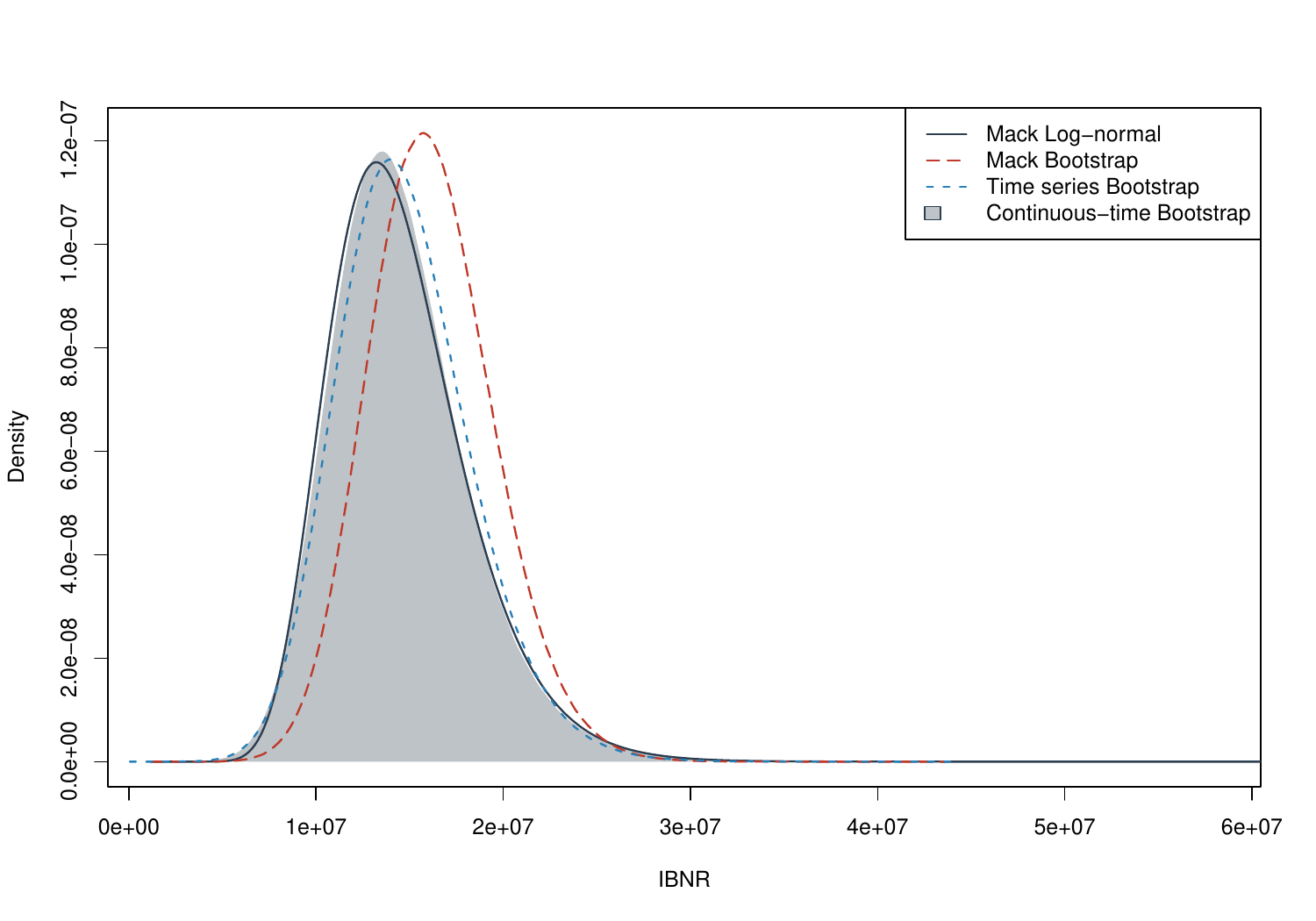}
\caption{Estimated conditional distributions of the total reserve with the dataset from Table \ref{triangle2}.\label{fig_bootstrap2}}
\end{figure}

This time, the empirical distributions exhibit clear differences.

\medbreak

For this dataset, across all bootstrap methods, $20\%$ to $40\%$ of the simulations produced at least one $C_{j}^{i} < 0$, which we replaced with 0. In the continuous-time bootstrap, such values were inherently zero, requiring no adjustment. This correction may introduce bias into the Mack Bootstrap and Time Series Bootstrap procedures. For the continuous-time bootstrap, we calculate:
\[
	\mathbb{P}(C_{2}^n = 0) = \exp(-1.8102) \approx 0.1636.
\]
This probability is no longer negligible. This is perfectly taken into account within the simulations and does not bias the methodology; it preserves the moments and remains tailored to the case. However, a probability of $\mathbb{P}(C_{2}^n = 0) \approx 0.1636$ seems unrealistic in practice. With a highly irregular triangle, we approach the limits of the method. The small value of $C_{1}^{n}$ plays a significant role here, being notably low relative to its column, especially in combination with the irregular triangle. If we replace with $C_{1}^{n-1}$ (24983 instead of 13121), we obtain $\mathbb{P}(C_{2}^{n-1} = 0) \approx 0.03184$, which is more reasonable but still not negligible.

\FloatBarrier

\section*{Acknowledgments}

The author acknowledges the financial support provided by the \emph{Fondation Natixis}.

\bibliographystyle{plain}
\bibliography{bibliographie}

\end{document}